\definecolor{violet}{RGB}{208,32,144}
\newcommand{\define}{\emph}
\newcommand{\A}{\mathcal{A}}
\newcommand{\B}{\mathcal{B}}
\newcommand{\G}{\mathbb{G}}
\newcommand{\N}{\mathbb{N}}
\newcommand{\s}{\sigma}
\renewcommand{\S}{\mathscr{S}}
\newcommand{\Seq}{\mathrm{S}}
\newcommand{\supp}{\mathrm{supp}}
\newcommand{\T}{\mathbf{T}}
\newcommand{\U}{\mathbb{U}}
\newcommand{\Z}{\mathbb{Z}}
\newcommand{\vect}[1]{\mathbf{#1}}
\newcommand{\sa}[2]{\mathbf{SA}_{#1}\left(#2\right)}
\newcommand{\reshift}{\mathcal{RE}}
\def\SA{\mathbf{SA}}
\newcounter{claimcount}[theorem]
\newcommand{\THMfont}[1]{{\sl #1}}
\newcommand{\Claim}[1]{\refstepcounter{claimcount} \vspace{0.3em}              
\noindent {\sc Claim \theclaimcount: \ }\THMfont{ #1}}
\newcommand{\bprf}[1][Proof:]{\begin{list}{}    {\setlength{\leftmargin}{0.5em}
\setlength{\rightmargin}{0em}  \setlength{\listparindent}{1em}}   \item {\em
\hspace{-1em}  #1  }}
\newcommand{\eprf}{\end{list}}
\newcommand{\bclaimprf}{\bprf}
\newcommand{\eclaimprf}{ \hfill $\Diamond$~{\scriptsize {\tt
Claim~\theclaimcount}}\eprf} 
\definecolor{rouge}{RGB}{255,77,77}
\definecolor{vert}{RGB}{0,178,102}
\definecolor{jaune}{RGB}{255,255,0}
\definecolor{violet}{RGB}{208,32,144}
\definecolor{orange}{RGB}{255,140,0}
\definecolor{bleu}{RGB}{0,0,205}
\title{Multidimensional effective S-adic systems\\ are sofic.}
\author{Nathalie Aubrun and Mathieu Sablik
}
\institute{LIGM, Universit\'e Paris-Est and LATP, Universit\'e de Provence}
\date{}
\begin{document}
\maketitle

\begin{abstract}
In this article we prove that multidimensional effective S-adic systems,
obtained by applying an effective sequence of substitutions chosen among a
finite set of substitutions, are sofic subshifts. 
\end{abstract}

\section*{Introduction}

Let $\A$ be a finite alphabet. A $d$-dimensional subshift $\T\subset\A^{\Z^d}$
is a closed and shift-invariant set of configurations, where the shift is the
natural action of $\Z^d$ on the configurations space $\A^{\Z^d}$. With a
combinatorial point of view, one can equivalently define subshifts by excluding
configurations that contain some forbidden finite patterns. Depending on the
conditions imposed on this set of forbidden patterns, it is possible to define
several classes of subshifts. The simplest one is the class of subshifts of
finite type (also called SFT), where the set of forbidden finite patterns may be
chosen finite. A larger class is the one of sofic subshifts, which are images of
SFT under a factor map. These two classes are defined locally and they are well
understood in dimension 1.

A way to construct minimal aperiodic subshifts is to consider subshifts
generated by a fix point of substitution, introduced in dimension one
by Thue~\cite{Thue1906} and generalized to higher dimensions. These subshifts
constitute the class of the substitutive subshifts. More precisely, for a
substitution $s$ one can consider the subshift $\T_{\{s\}}$ where the allowed
patterns are given by iterations of the substitution $s$ on a letter of $\A$, or
$\T'_{\{s\}}$ the set of configurations which have pre-images by arbitrarily
many
iterations of $s$. Of course $\T_{\{s\}}\subset\T'_{\{s\}}$. In
dimension 1 the class of substitutive subshifts and the class of sofic subshifts
are disjoint: substitutive subshifts have low complexity~\cite{Pansiot84}, while
the only sofic subshifts with low complexity are periodic. In the
multidimensional framework the situation is different since all substitutive
subshifts are sofic. This result is a generalization to any
substitution~\cite{mozes1989tss} of the original construction of aperiodic
tilings~\cite{robinson1971uan}.

A possible generalization of the construction of substitutive subshifts is to
consider S-adic subshifts, that were introduced by S. Ferenczi in the
one-dimensional setting~\cite{Ferenczi96}. Given a finite set of substitutions
$\S$, and a sequence $\Seq\in\S^\N$, we define the subshifts $\T_\Seq$ and
$\T'_\Seq$ where the iterations of the different substitutions are given by the
sequence $\Seq$. This class of subshifts is studied in dimension 1 and, under
some condition on the set $\S$, it is shown that the complexity is
low~\cite{Ferenczi96,Durand00}. It is thus natural to wonder if there exists
sofic S-adic subshifts in higher dimensions. An argument of cardinality shows
that the class of S-adic multidimensional subshifts is not included in the class
of sofic subshifts. Indeed the class of sofic subshifts is countable while there
are uncountably many ways to chose an infinite sequence of $\S$. The purpose of
this article is to show that S-adic subshifts which are sofic are exactly those
for which the sequence $\Seq$ is effective. 

The main idea of the proof is to use the result of S. Mozes~\cite{mozes1989tss}
which proves that a substitutive subshift is sofic in the case where the
substitution is not deterministic. This means that each time one wants to use a
substitution, it is possible to choose a rule among a set of substitutions $\S$.
However, contrary to the S-adic subshifts, at each level of iteration different
substitutions of $\S$ may appear. The aim of the proof is to synchronize these
substitutions, and in that purpose we need to introduce a one dimensional
effective subshift which codes the sequence of substitutions. This effective
subshift can
be realized by a $3$-dimensional sofic subshift thanks to the result of M.
Hochman~\cite{hochman2007drp} or by a $2$-dimensional  sofic subshift thanks to
the improvement obtained by~\cite{DurandRomashchenkoShen09}
or~\cite{AubrunSablik10}.

\section{Definition and classical properties}\label{section.def}

\subsection{Notion of subshift}\label{subsection.subshift}

Let $\A$ be a finite alphabet and $d$ be a positive integer. A
\define{configuration} $x$ is an element of $\A^{\Z^d}$. Let $\U$ be a
finite subset of $\Z^d$, we denote by $x_{\U}$ the \define{restriction}
of $x$ to $\U$. A \define{$\Z^d$-dimensional pattern} is an element
$p\in\A^{\U}$ where $\U\subset\Z^d$ is finite, $\U$ is the \define{support} of
$p$, which is denoted by $\supp(p)$. A pattern $p$ of support $\U\subset\Z^d$
\define{appears} in a configuration $x$ if there exists $i\in\Z^d$ such that
$p_{\U}=x_{i+\U}$, and in this case we note $p\sqsubset x$.

We define a topology on $\A^{\Z^d}$ by endowing $\A$ with the discrete topology,
and considering the product topology on $\A^{\Z^d}$. For this topology,
$\A^{\Z^d}$ is a compact metric space on which $\Z^d$ acts by translation via
$\s$ defined for every $i\in\Z^d$ by:
$$
\s_{\A}^i:
\left(\begin{array}{cccc}
\A^{\Z^d} & \longrightarrow &\A^{\Z^d}&\\
x&\longmapsto & \s_{\A}^i(x)& \textrm{ such that } \s_{\A}^i(x)_u=x_{i+u} \
\forall u\in\Z^d
\end{array}\right).
$$

The $\Z^d$-action $(\A^{\Z^d},\s)$ is called the \define{fullshift}. Let
$\T\subset\A^{\Z^d}$ be a closed subset $\s$-invariant, the $\Z^d$-action
$(\T,\s)$ is a \define{subshift}. 

Let $F$ be a set of finite patterns, we define the \define{subshift of forbidden
patterns} $F$ by
$$\T_F=\left\{x\in\A^{Z^d}: \forall p\in F,\, p\textrm{ does not appear in }x
\right\}.$$
 
It is well known that every subshift can be defined by this
way~\cite{lind1995isd}. Let $\T$ be a subshift. If there exists a finite set of
forbidden patterns such that $\T=\T_F$, then $\T$ is a \define{ subshift of
finite type}. If there exists a recursively enumerable set of forbidden patterns
such that $\T=\T_F$, then $\T$ is an \define{effective subshift}.

\subsection{Factor and projective subaction}\label{subsection.operations}

Let  $(\A^{\Z^d},\s)$ and  $(\B^{\Z^d},\s)$ be two fullshifts. A \define{factor}
is a continuous function $\pi:\A^{\Z^d}\to\B^{\Z^d}$ such that
$\pi\circ\s=\s\circ\pi$. Let $\T\subset\A^{\Z^d}$ be an SFT and let
$\pi:\A^{\Z^d}\to\B^{\Z^d}$ be a factor, then $\pi(\T)\subset\A^{\Z}$ is a
subshift called a \define{sofic subshift}. In dimension $1$, sofic subshifts are
well understood since they are exactly subshifts whose language is
regular~\cite{lind1995isd}.

Let $\G$ be a sub-group of $\Z^d$ finitely and freely generated by
$u_1,u_2,\dots,u_{d'}$ ($d'\leq d)$. Let $\T\subseteq\A^{\Z^d}$ be a subshift,
the \define{projective subdynamic} -- or \define{projective subaction} -- of
$\T$ according to $\G$ is the subshift of dimension $d'$ defined by
$$\sa{\G}{\T}=\left\{y\in\A^{\Z^{d'}}\ :\  \exists x\in\T \textrm{ such that }
\forall i_1,\dots,i_{d'}\in\Z^{d'},
y_{i_1,\dots,i_{d'}}=x_{i_1u_1+\dots+i_{d'}u_{d'}}\right\}.$$

In~\cite{pavlovschraudner2009} the authors show that any $1$-dimensional sofic
subshift with positive entropy can be obtained as the projective subaction of a
$2$-dimensional SFT, but it remains open to know whether these subshifts are the
only ones that can be obtained in that way. A complete characterization was
obtained by Hochman~\cite{hochman2007drp} if we allow factor maps in addition to
projective subactions: the class of subshifts obtained by factor maps and
projective subactions of SFT is exactly the class of effective subshifts. The
original proof contains a construction that realizes any $1$-dimensional
effective subshift inside a $3$-dimensional SFT. This construction was improved
simultaneously by two different
techniques~\cite{AubrunSablik10,DurandRomashchenkoShen2010} to get
any $1$-dimensional effective subshift inside a $2$-dimensional SFT.

\begin{theorem}[\cite{hochman2007drp,AubrunSablik10,DurandRomashchenkoShen2010}]
\label{theorem.simulation}
Any effective subshift of dimension $d$ can be obtained with factor and
projective subaction operations from a subshift of finite type of strictly
higher dimension.
\end{theorem}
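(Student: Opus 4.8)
The plan is to reduce the statement to simulating a Turing machine inside a tiling, since an effective subshift is by definition given by a recursively enumerable list of forbidden patterns. Let $\T\subseteq\A^{\Z^d}$ be effective, and fix a Turing machine $M$ that enumerates a set $F$ of forbidden patterns with $\T=\T_F$. I would build an SFT $\T'$ in dimension $d+k$ (with $k=2$ for the crude bound of~\cite{hochman2007drp} and $k=1$ for the sharp versions of~\cite{AubrunSablik10,DurandRomashchenkoShen2010}), whose configurations are made of several superimposed layers, together with a factor map $\pi$ retaining one symbolic layer and a projective subaction onto a $d$-dimensional sublattice, so that the composition recovers exactly $\T$.

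The architecture of $\T'$ I would use has three ingredients. First, a \emph{symbol layer} carrying an element of $\A$, constrained to be constant along the $k$ extra directions, so that its restriction to the $d$-dimensional sublattice exposed by the subaction is well defined; this is what the factor ultimately keeps. Second, a \emph{hierarchical layer}: a self-similar structure in the spirit of Robinson's aperiodic tiling that partitions $\Z^{d+k}$ into nested square computation zones of unbounded side length, enforced by finitely many local matching rules. Third, a \emph{computation layer} in which, inside each zone, a space-time diagram of $M$ is written, the remaining directions of the zone furnishing the tape and the successive time steps; since each transition of $M$ is a local constraint, the simulation is expressible by finite-type rules.

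The local rules would then couple the layers so that, within each zone, $M$ reads the finite window of the symbol layer visible there and runs: if the enumeration of $F$ ever outputs a pattern occurring in that window, the rules create a local contradiction forbidding the tiling. Hence every globally legal configuration of $\T'$ carries a symbol layer containing no forbidden pattern, which yields soundness — the image under factor and subaction lands in $\T$. For completeness, the self-similar structure guarantees that every finite window of a configuration avoiding $F$ is inspected at some arbitrarily large scale, with enough room and time to run the corresponding verification, so any configuration of $\T$ lifts to a legal configuration of $\T'$.

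The hardest part, and the point where the three cited references diverge, is this completeness direction together with the control of the dimension. One must simultaneously ensure that the hierarchical zones grow fast enough to eventually test every pattern, that the computation budget (tape length and number of steps) inside a zone of side $n$ suffices to run $M$ long enough, and that a configuration avoiding $F$ genuinely admits a globally consistent assignment of all three layers. Collapsing the separate ``space for the hierarchy'' and ``space for the computation'' into a single extra direction — going from $k=2$ down to $k=1$ — is the delicate refinement achieved by the self-referential tile-set construction of~\cite{DurandRomashchenkoShen2010} and by the substitutive approach of~\cite{AubrunSablik10}, improving Hochman's original two-dimensional overhead.
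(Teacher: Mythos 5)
You should first note that the paper does not prove this statement at all: Theorem~\ref{theorem.simulation} is imported as a black box from \cite{hochman2007drp,AubrunSablik10,DurandRomashchenkoShen2010}, and the surrounding text only records that Hochman's construction puts a $1$-dimensional effective subshift inside a $3$-dimensional SFT while the two later works bring this down to dimension $2$. So there is no in-paper argument to match; your proposal has to be judged as a reconstruction of the cited proofs. As such, the architecture you describe is the right one -- a symbol layer constant in the extra directions, a Robinson-style hierarchical layer, a Turing-machine layer enumerating $F$ and vetoing detected forbidden patterns, with soundness and completeness as the two directions to check -- and your accounting of $k=2$ versus $k=1$ matches the literature.

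But as a proof it has genuine gaps, essentially the ones you flag yourself and then set aside. Concretely: (a) you assert that ``the self-similar structure guarantees that every finite window \dots is inspected at some arbitrarily large scale,'' but this is exactly what fails for a naive hierarchy of nested computation zones -- a given window of the symbol layer can straddle the boundary between zones at every level and never fall under the jurisdiction of any single computation, and the cited constructions spend most of their effort arranging responsibility regions and routing the symbol-layer data into the computation squares without the squares of different levels obstructing one another; (b) the completeness direction -- that every $x\in\T$ admits a globally consistent superposition of all three layers -- is stated as a consequence of the structure rather than proved, yet it is the heart of the matter (one must show the hierarchical skeleton exists over \emph{every} base configuration and that the computations it hosts never halt in a rejecting state when the base avoids $F$, which requires a careful compactness or limit argument over partial runs of the enumeration); and (c) the quantitative bookkeeping matching the growth of zone sizes to the time needed for $M$ to enumerate a given forbidden pattern is only named, not carried out. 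None of these is a wrong turn -- they are the correct difficulties -- but a proof would consist precisely of resolving them, so what you have is an accurate table of contents for the arguments of \cite{hochman2007drp} and its refinements rather than a proof.
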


\section{Substitutive and S-adic subshifts}\label{section.sadic}

In this section we present substitutives and S-adic subshifts and prove that
multidimensional S-adic subshifts given by an effective sequence of
substitutions are sofic.

\subsection{Substitutions}

Let $\vect{n}=(n_1,\dots,n_d)\in\N^d$ and $\vect{k}=(k_1,\dots,k_d)\in\N^d$, we
define $\vect{n}+\vect{k}=(n_1+k_1,\dots,n_d+k_d)\in\N^d$,
$\vect{n}\otimes\vect{k}=(n_1.k_1,\dots,n_d.k_d)\in\N^d$ and
$\vect{n}^i=\vect{n}\otimes\dots\otimes\vect{n}$ with $i$ factors. Given
$\vect{k}=(k_1,\dots,k_d)$, we denote by $\U_{\vect{k}}$ the rectangle
$[0;k_1]\times[0;k_2]\times\dots\times[0;k_l]$. We say that $\vect{i}$ is
smaller (resp. strictly smaller) than $\vect{j}$ if for every $1\leq l \leq d$,
one has $i_l\leq j_l$ (resp. $i_l<j_l$). We denote it
by $\vect{i}\leq\vect{j}$ (resp. $\vect{i}<\vect{j}$).

Let $\A$ be a finite alphabet, we define the \define{set of rectangular pattern}
$\mathcal{P}=\bigcup_{\vect{k}\in\N^d}\A^{\U_\vect{k}}$. A
\define{$(\A,d)$-multidimensional substitution} of size $\vect{k}$ is a function
$s:\A\rightarrow \mathcal{P}$, for all $a\in\A$, we associate the vector
$\vect{k}^s(a)=(k^s_1(a),\dots,k^s_d(a))$ such that
$\supp(s(a))=\U_{\vect{k^s}(a)}$, that is to say the support of $s(a)$ depends
on the letter $a$. A $(\A,d)$-multidimensional substitution is non degenerate if
$k^s_l(a)\geq 1$ for every $l\in[1,d]$ and every $a\in\A$.

Let $(\vect{k}^n)_{n\in\Z}$ be a sequence of $d$-dimensional vectors. For each
$n\in\Z$ we define the function
$$
\phi^{(\vect{k}^n)_{n\in\Z}} :
\left(\begin{array}{ccl}
\Z^d & \rightarrow & \Z^d\\
\vect{i} & \mapsto &
\left(\phi_1(\vect{i}),\phi_2(\vect{i}),\dots,\phi_d(\vect{i})\right)           
\end{array}\right)
$$
where $\phi_i(r)=\sum_{j=0}^{r} (\vect{k}^j)_i$ if $r\geq 0$ and
$\phi_i(r)=\sum_{j=-1}^{r} (\vect{k}^j)_i$ if $r<0$.

Let $p\in\A^{\U_\vect{k}}$ be a rectangular pattern with finite support
$\U_\vect{k}\subset\Z^d$. We say that the substitution $s$ is
\define{compatible} with the pattern $p$ (resp. the configuration $x$) if for
all $\vect{i}=(i_1,\dots,i_d)\in\U$ and $\vect{j}=(j_1,\dots,j_d)\in\U$  (resp.
$\vect{i}=(i_1,\dots,i_d)\in\Z^d$ and $\vect{j}=(j_1,\dots,j_d)\in\Z^d$) such
that $i_l=j_l$ for one $l\in[1,d]$, one has
$k^{\textbf{s}}_l(p_\vect{i})=k^{\textbf{s}}_l(p_\vect{j})$.  Given a
substitution $s$ compatible with a configuration $x\in\Z^d$, one can transform
$\Z^d$ into a non-regular grid thanks to the function
$\phi^x=\phi^{(\vect{k}^s(x_{(n,\dots,n)}))_{n\in\Z}}$ (see
Figure\ref{figure.expand_grid}).

\begin{figure}[H]
\centering
\begin{tikzpicture}[scale=0.35]

\draw[very thick,color=black!10,fill=black!10,decorate,decoration={random steps,segment length=6pt,amplitude=3pt}](0,0) rectangle (4,4);
\draw (-2,2) node{$\s^{(3,-2)}(x)=$};
\draw [dashed] (2,0) -- (2,4);
\draw [dashed] (0,2) -- (4,2);
\draw (0.5,3) node{$\blacktriangle$};
\draw (2,2) node{$\bullet$};

\draw[very thick,color=black!10,fill=black!10,decorate,decoration={random steps,segment length=6pt,amplitude=3pt}](12,-2) rectangle (20,6);
\draw (8,2) node{$\s^{\phi^x\left((3,-2)\right)}[s_\infty(x)]=$};
\draw [dashed] (16,-2) -- (16,6);
\draw [dashed] (12,2) -- (20,2);

\draw[very thick, fill=white] (12.5,3.5) rectangle (13.5,5);
\draw (11.5,4.25) node{\tiny{$s(\blacktriangle)$}};
\draw[fill=white] (13.5,3.5) rectangle (14.5,5);
\draw[fill=white] (14.5,3.5) rectangle (16,5);
\draw[fill=white] (16,3.5) rectangle (17,5);

\draw[fill=white] (12.5,2.5) rectangle (13.5,3.5);
\draw[fill=white] (13.5,2.5) rectangle (14.5,3.5);
\draw[fill=white] (14.5,2.5) rectangle (16,3.5);
\draw[fill=white] (16,2.5) rectangle (17,3.5);

\draw[fill=white] (12.5,2) rectangle (13.5,2.5);
\draw[fill=white] (13.5,2) rectangle (14.5,2.5);
\draw[fill=white] (14.5,2) rectangle (16,2.5);
\draw[very thick, fill=white] (16,2) rectangle (17,2.5);
\draw (18,2.25) node{\tiny{$s(\bullet)$}};

\draw[very thick,color=black!10,fill=black!10,decorate,decoration={random steps,segment length=6pt,amplitude=3pt}](0,9) rectangle (4,13);
\draw (-2,11) node{$x=$};
\draw [dashed] (2,9) -- (2,13);
\draw [dashed] (0,11) -- (4,11);
\draw (2,11) node{$\blacktriangle$};
\draw (3.5,10) node{$\bullet$};

\draw[very thick,color=black!10,fill=black!10,decorate,decoration={random steps,segment length=6pt,amplitude=3pt}](10,7) rectangle (19,15);
\draw (8,11) node{$s_\infty(x)=$};
\draw [dashed] (14,7) -- (14,15);
\draw [dashed] (10,11) -- (19,11);

\draw[very thick, fill=white] (14,11) rectangle (15,12.5);
\draw (13,11.75) node{\tiny{$s(\blacktriangle)$}};
\draw[fill=white] (15,11) rectangle (16,12.5);
\draw[fill=white] (16,11) rectangle (17.5,12.5);
\draw[fill=white] (17.5,11) rectangle (18.5,12.5);

\draw[fill=white] (14,10) rectangle (15,11);
\draw[fill=white] (15,10) rectangle (16,11);
\draw[fill=white] (16,10) rectangle (17.5,11);
\draw[fill=white] (17.5,10) rectangle (18.5,11);

\draw[fill=white] (14,9.5) rectangle (15,10);
\draw[fill=white] (15,9.5) rectangle (16,10);
\draw[fill=white] (16,9.5) rectangle (17.5,10);
\draw[very thick, fill=white] (17.5,9.5) rectangle (18.5,10);
\draw (19.5,9.75) node{\tiny{$s(\bullet)$}};

\end{tikzpicture}
\caption{If the configuration $x$ is compatible with the substitution $s$, then
one can define $\phi^x=\phi^{(\vect{k}^s(x_{(n,\dots,n)}))_{n\in\Z}}$.}
\label{figure.expand_grid}
\end{figure}
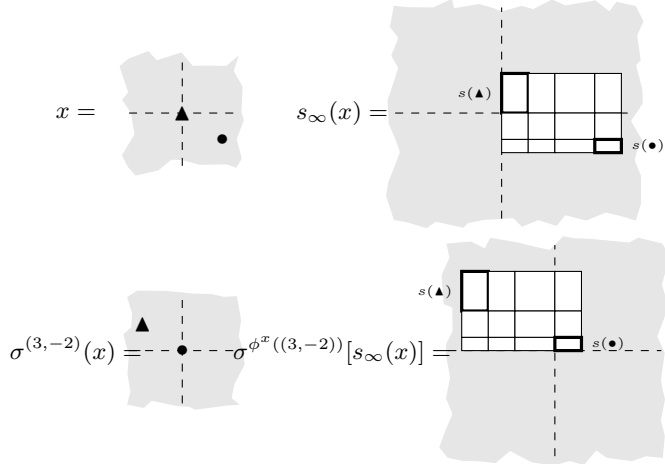

If the substitution $s$ is compatible with the configuration $x$ that contains a
pattern $p$, the substitution $s$ acts on $p$ and we obtain a pattern $s(p)$
whose support is 
$$\supp(s(p))=\bigcup_{\vect{i}\in\supp(p)}
\U_{\vect{k}^s(p_\vect{i})}+\phi^x(\vect{i})$$
and such that
$$\forall \vect{i}\in\supp(p),\forall \vect{j}\in\supp(s(p_\vect{i})),
s(p)_{\phi^x(\vect{i})+\vect{j}}=s(p_\vect{i})_\vect{j}.$$
So the substitution $s$ can easily be extended to a function on configurations
$s_\infty:\left( \begin{array}{ccc} \A^{\Z^d}&\to&\A^{\Z^d}\\ x&\mapsto&s(x)
\end{array}\right)$ where the if the substitution $s$ is compatible with the
configuration $x\in\A^{\Z^2}$,  the configuration $s_\infty(x)$ is defined as
explained above.

\begin{example}\label{example.1}
Let $\A$ be the two elements alphabet $\A=\{\circ,\bullet\}$ and $s$ be the
two-dimensional substitution whose rules are
$$
\footnotesize
\begin{array}{ccccccc}
\circ & \mapsto & \begin{array}{cc}\circ&\circ\\ \circ&\circ\end{array} &
\textrm{ and } & (\bullet,,) & \mapsto & \begin{array}{cc}\circ&\circ\\
\bullet&\circ\end{array}
\end{array}.
$$
For instance, the substitution $s$ acts on the pattern $p$ described below

\footnotesize
$$
\begin{array}{cccc}
s:
&
p=\begin{array}{ccc}
\circ&\bullet&\bullet\\
\bullet&\circ&\circ
\end{array}
&
\mapsto
&
s(p)=\begin{array}{cc|cc|cc}
\circ&\circ&\circ&\circ&\circ&\circ\\
\circ&\circ&\bullet&\circ&\bullet&\circ\\
\hline
\circ&\circ&\circ&\circ&\circ&\circ\\
\bullet&\circ&\circ&\circ&\circ&\circ
\end{array}
\end{array}.
$$
\normalsize

\end{example}


\subsection{Composition of substitutions}

Consider now that one wants to apply not only one but a finite set of
substitutions on a (finite or not) pattern $p$. We first define how to compute
the composition of two or more substitutions. Let $s,s'$ be two substitutions.
We say that $s'$ is compatible with $s$ if for any pattern $p$ compatible with
$s$, $s(p)$ is compatible with $s'$. If $s'$ is compatible with $s$, we can thus
define the composition $s'\circ s$. For a sequence of substitutions
$S_{[0,n]}=(s_0,\dots,s_n)$, one defines by induction the substitution
$\widehat{S_{[0,n]}}$ if $s_0$ is compatible with $\widehat{S_{[1,n]}}$ by
$\widehat{S_{[0,n]}}=s_0\circ\widehat{S_{[1,n]}} $. 

Let $\S$ be a finite set of $(\A,d)$-multidimensional substitutions. We present
the two classical points of view to make $\S$ act on the set of configurations
$\A^{\Z^d}$. In the first one, the set $\S$ acts on a configuration $x$ via a
sequence of substitutions $\Seq=(s_i)_{i\in\N}\in\S^\N$, and at iteration $i$
the substitution $s_i$ is applied to every letter in $x$ (see
Section~\ref{subsection.sadic}). In the second one, the set $\S$ acts on a
configuration $x$ in a non uniform way, that is to say at each iteration the
substitution applied depends on the position in $x$ (see
Section~\ref{subsection.nondeterm}).


\subsection{S-adic subshifts}\label{subsection.sadic}

Let $\S$ be a finite set of $(\A,d)$-multidimensional substitutions and let
$\Seq\in\S^{\N}$ be a sequence of substitutions. We want to define how this
sequence acts on a letter $a\in\A$. The principle is that at the iteration
number $i$, the substitution $s_0$ is applied to the whole pattern
$s_1\circ\dots\circ s_i(a)$. We define the two $S$-adic subshifts based on
this action of $\Seq$ on letters of $\A$
\begin{eqnarray*}
\T_{\Seq} & = &\left\{x\in\A^{\Z^d}: \forall p\sqsubset x, \, \exists
a\in\A,\exists n\in\N,\quad p\sqsubset \widehat{\Seq_{[0,n]}}(a) \right\}\\
\T'_{\Seq} & = &\left\{x\in\A^{\Z^d}: \forall n\in\N,\,\exists
y\in\A^{\Z^d},\quad \widehat{\Seq_{[0,n]}}(y)=x\right\}.\\
\end{eqnarray*}

The first subshift $\T_{\Seq}$ will be called the \define{local $S$-adic
subshift}. With the sequence of substitutions $\Seq$ we can produce patterns,
that are the $\widehat{S_{[0,n]}}(a)$ for any letter $a\in\A$, and these
patterns are seen are the allowed pattern of the subshift $\T_{\Seq}$. The
second subshift $\T'_{\Seq}$ will be called the \define{global $S$-adic
subshift}, and this time the idea is to consider only configurations
$x\in\A^{\Z^2}$ for which it is possible to find a pre-image of any order under
the sequence of substitutions $\Seq$. Obviously $\T_{\Seq}\subset\T'_{\Seq}$,
but the reciprocal inclusion does not necessary hold as Example~\ref{example.2}
shows.

\begin{example}\label{example.2}
Let $s$ be the substitution of Example~\ref{example.1}. Then if we choose
$\S=\{s\}$ and so $\Seq=s^\N$, the two $S$-adic subshifts defined above are in
this case 
$$
\begin{array}{ccccccc}
\T_{\Seq} & = &\left\{ \circ^{\Z^2} \right\} & \textrm{ and }& \T'_{\Seq} & =
&\left\{ \circ^{\Z^2}\right\}\cup\left\{\s^i(x_\bullet),i\in\Z^2 \right\}
\end{array}
$$

where the configuration $x_{\bullet}$ such that $x_{(i,j)}=\circ$ if
and only if $i\neq0$ and $j\neq0$ is in the subshift $\T'_{\Seq}$ -- $x_\bullet$
is a fixed-point of $s$ -- but not in the subshift $\T_{\Seq}$, since the
central pattern $x_{\U_1}$ appears neither in a
$\widehat{S_{[0,n]}}(\bullet)$ nor in a $\widehat{S_{[0,n]}}(\circ)$.
\end{example}


\subsection{Non-deterministic substitutions}\label{subsection.nondeterm}

Let $\S$ be a set of substitutions, another way to consider the action of this
set on a configuration in a non-deterministic (that is to say non-uniform) way.
That is to say, if we consider a pattern $\A^{\U}$, each letter can be modified
by a distinct substitution.

For a finite set $\U\subset\Z^d$, we consider the pattern of substitutions
$\textbf{s}\in\S^{\U}$. We say that the pattern of substitution
$\textbf{s}\in\S^{\U}$ is \define{compatible} with a pattern $p\in\A^{\U}$ if
for all $i=(i_1,\dots,i_d)\in\U$ and $j=(j_1,\dots,j_d)\in\U$ such that
$i_l=j_l$ for one $l\in[1,d]$, one has
$k^{\textbf{s}_i}_l(p_i)=k^{\textbf{s}_j}_l(p_j)$. 

If the pattern of substitution $\textbf{s}\in\S^{\U_\vect{k}}$ is compatible
with a pattern $p\in\A^{\U_\vect{k}}$ that appears in a configuration $x$, it
acts on $p$ and we obtain the pattern $\textbf{s}(p)$
\begin{itemize} 
\item whose support is $\displaystyle\supp(\vect{s}(p))=\bigcup_{i\in\supp(p)}
\U_{\vect{k}^{\vect{s}_i}(p_\vect{i})}+\phi^x(\vect{i})$
\item and such that $\forall \vect{i}\in\supp(p),\forall
\vect{j}\in\supp(\vect{s}_i(p_\vect{i})),
\vect{s}(p)_{\phi^x(\vect{i})+\vect{j}}=\vect{s}_i(p_\vect{i})_\vect{j}.$
\end{itemize}

\begin{example}\label{example.3}
Let $\S=\{s_a,s_b,s_c,s_d\}$ be a set of two-dimensional substitutions on the
alphabet $\A=\{ \circ,\bullet \}$ defined by the following rules
$$
\footnotesize
\begin{array}{cccccccc}
s_a : &  \circ & \mapsto & \begin{array}{cc}\circ&\circ\\ \circ&\circ\end{array}
& \textrm{ and } & \bullet & \mapsto & \begin{array}{cc}\circ&\circ\\
\bullet&\circ\end{array},
\end{array}
\begin{array}{cccccccc}
s_b : &  \circ & \mapsto & \begin{array}{ccc}\circ&\bullet&\circ\\
\circ&\bullet&\circ\end{array} & \textrm{ and } & \bullet & \mapsto &
\begin{array}{ccc}\circ&\circ&\circ\\ \bullet&\circ&\circ\end{array}
\end{array}
$$
$$
\footnotesize
\begin{array}{cccccccc}
s_c : &  \circ & \mapsto & \begin{array}{cc}\circ&\circ\\ \bullet&\circ\\
\circ&\bullet\end{array} & \textrm{ and } & \bullet & \mapsto &
\begin{array}{cc}\circ&\circ\\ \circ&\bullet\\ \bullet&\bullet\end{array}
\end{array},
\begin{array}{cccccccc}
s_d : &  \circ & \mapsto & \begin{array}{ccc}\bullet&\bullet&\bullet\\
\bullet&\bullet&\bullet\\ \circ&\circ&\circ\end{array} & \textrm{ and } &
\bullet & \mapsto & \begin{array}{ccc}\circ&\circ&\circ\\ \circ&\circ&\circ\\
\bullet&\bullet&\bullet\end{array}
\end{array}.
$$
Then given the pattern $p$ pictured below, the patterns of substitution
$\vect{s}$ and $\vect{s'}$ are compatible with $p$ and we can define the
patterns $\vect{s}(p)$ and $\vect{s'}(p)$, while the pattern of substitution
$\vect{s''}$ is not.  
$$
\footnotesize
p=\begin{array}{cccc}
\circ & \bullet & \bullet & \bullet \\
\bullet & \bullet & \circ & \circ
\end{array},~
\vect{s}=\begin{array}{cccc}
s_a & s_a & s_b & s_a \\
s_c & s_c & s_d & s_c
\end{array},~
\vect{s'}=\begin{array}{cccc}
s_a & s_a & s_a & s_a \\
s_c & s_c & s_c & s_c
\end{array},~
\vect{s''}=\begin{array}{cccc}
s_a & s_a & s_b & s_a \\
s_c & s_c & s_d & s_a
\end{array}
$$
$$
\footnotesize
\vect{s}(p)=
\begin{array}{cc|cc|ccc|cc}
\circ & \circ & \circ & \circ & \circ & \circ & \circ & \circ & \circ \\
\circ & \circ & \circ & \circ & \bullet & \circ & \circ & \bullet & \circ \\
\hline
\circ & \circ & \circ & \circ & \bullet & \bullet & \bullet & \circ & \circ \\
\circ & \bullet & \circ & \bullet & \bullet & \bullet & \bullet & \bullet &
\circ \\
\bullet & \bullet & \bullet & \bullet & \circ & \circ & \circ & \circ & \bullet 
\end{array},~
\vect{s'}(p)=
\begin{array}{cc|ccc|ccc|cc}
\circ & \circ & \circ & \circ & \circ & \circ & \circ & \circ & \circ & \circ \\
\bullet & \circ & \circ & \circ & \circ & \circ & \circ & \circ & \circ &
\bullet \\
\circ & \bullet & \bullet & \bullet & \bullet & \bullet & \bullet & \bullet &
\bullet & \bullet \\
\hline
\circ & \circ & \circ & \circ & \circ & \circ & \bullet & \circ & \circ & \circ
\\
\bullet & \circ & \bullet & \circ & \circ & \circ & \bullet & \circ & \circ
&
\circ 
\end{array}
$$
\end{example}

We define the set of \define{$\S$-patterns} by induction. A $\S$-pattern of
level
$0$ is an element of $\A$, and $p$ is a $\S$-pattern of level $n+1$ if there
exists a $\S$-pattern $p'\in\A^{\U}$ of level $n$ and a pattern of substitution
$\textbf{s}\in\S^{\U}$ compatible with $p'$ such that $\textbf{s}(p')=p$. Of
course the support of an $\S$-pattern is rectangular. The $\S$-patterns lead us
to define $\T_{\S}$, the \define{local subshift generated by the set of
substitutions~$\S$} $$\T_{\S}=\left\{x\in\A^{\Z^d}: \forall p\sqsubset x,\; p
\textrm{ is a
sub-pattern of a $\S$-pattern}\right\}.$$

Suppose that $\vect{s}\in\S^{\Z^d}$ is an infinite pattern of
substitutions and $x\in\A^{\Z^d}$ is a configuration. We denote by $\vect{s}(x)$
the configuration in $\A^{\Z^d}$ obtained if one applies $s_i$ on $x_i$ for
every $i\in\Z^d$. We thus define $\T'_{\S}$ the \define{global subshift
generated by the set of substitutions~$\S$} as follows
$$
\begin{array}{cl}
\T'_{\S}= & \left\{x\in\A^{\Z^d}:\forall n\in\N,\exists
y\in\A^{\Z^d},\exists(\vect{s}_0,\dots,\vect{s}_{n-1})\in\left(\S^{\Z^d}
\right)^n,\right.\\
 & \left.\vect{s}_0\circ\dots\circ \vect{s}_{n-1} (y)=x\right\}.
\end{array}
$$

\begin{remark}
One has both $\T_\Seq\subseteq\T_\S$ and $\T'_\Seq\subseteq\T'_\S$ for any
sequence $\Seq$.
\end{remark}

\section{Realization by sofic subshifts}\label{section.realization}

\subsection{Mozes theorem and the property A}\label{subsection.mozes}

In~\cite{mozes1989tss} Mozes studied non deterministic multidimensional
substitutions, and proved that provided a non deterministic substitution
satisfies a good property, then the subshift it generates is sofic.

All substitutions we consider here are deterministic since the substitutions
rules are given by a function. Nevertheless this formalism provides a way to
study non deterministic substitutions. Given $s$ a non deterministic
substitution, if a letter $a\in\A$ has two patterns $p_1,p_2$ as images, one
replaces $s$ by $s_1$ and $s_2$, where $s_1$ has the same substitutions rules as
$s$ without the rule $a\rightarrow p_2$, and $s_2$ has the same substitutions
rules as $s$ without the rule $a\rightarrow p_1$. By iterating this process, we
can transform a non deterministic substitution into a set $\S$ of deterministic
substitutions, so that the subshift called $(\Omega,\Z^2)$ by Mozes is exactly
the subshift $\T_\S$.

\begin{theorem}[\cite{mozes1989tss}]\label{theorem:Mozes}
Let $\S$ be a set of non degenerate deterministic multidimensional substitutions
of type $A$. Then the subshift $\T_\S$ is a sofic. 
\end{theorem}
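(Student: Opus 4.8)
The plan is to realize $\T_\S$ as a factor of a subshift of finite type, and thereby conclude soficity directly from the definition. Every $x\in\T_\S$ is built hierarchically: each of its finite patterns is a subpattern of some $\S$-pattern, and an $\S$-pattern of level $n+1$ is obtained from one of level $n$ by applying a compatible pattern of substitutions. Reading this backwards, $x$ carries an infinite tower of nested rectangular decompositions, each level refining into the next through the rules of $\S$. My aim is to encode this tower of supertiles inside the alphabet of an SFT $\widehat{\T}$, so that the projection onto the $\A$-coordinate maps $\widehat{\T}$ exactly onto $\T_\S$.

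First I would enrich each cell with the combinatorial data describing its position inside its mother supertile: the letter $a$ of the parent, the substitution of $\S$ that was applied to $a$, and the coordinate $\vect{j}\in\supp(\vect{s}(a))$ the cell occupies inside the image pattern. Local matching rules then force that, across a level-$1$ supertile, the cells spell out exactly $\vect{s}(a)$, and the non-degeneracy condition $k^s_l(a)\geq 1$ keeps every supertile genuinely $d$-dimensional. The difficulty is that the tower has infinitely many levels while the alphabet must stay finite. This is handled, as in the Robinson--Mozes construction, by a self-similar system of markers forcing a hierarchy of nested grids: the grid lines delimiting level-$(n+1)$ supertiles are produced, via $\S$, from the letters sitting on the level-$n$ grid, so that a single finite set of tiles reproduces the same bookkeeping at every scale.

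The role of property $A$ is precisely to make this finite encoding possible: it guarantees that the decomposition of a configuration into supertiles is locally recognizable and essentially unique, so that a cell can determine from a bounded window which supertile boundaries pass through it at each level, and distinct hierarchical structures cannot be glued ambiguously. Without recognizability the cut-lines of the nested partitions could not be enforced by local rules and the tile set would fail to be finite. Under property $A$ the consistency constraints between two consecutive levels -- that the child pattern really is the $\S$-image of the parent letter, and that the parent's own position is compatible with its grandparent -- close up into a finite list of admissible tile pairs.

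It then remains to check that the letter-projection $\pi:\widehat{\A}^{\Z^d}\to\A^{\Z^d}$ satisfies $\pi(\widehat{\T})=\T_\S$. The inclusion $\pi(\widehat{\T})\subseteq\T_\S$ holds because the local rules guarantee that every finite pattern seen in a configuration of $\widehat{\T}$ sits inside a genuine $\S$-pattern read off from the encoded hierarchy. For the reverse inclusion, given $x\in\T_\S$ I would use its supertile decomposition to fill in the marker and coordinate data level by level, producing a preimage in $\widehat{\T}$, compactness ensuring that the finite-level liftings assemble into a single configuration. I expect the main obstacle to be the construction of the self-similar grid together with the verification that property $A$ indeed yields a finite, locally checkable tile set realizing the full infinite hierarchy -- in particular controlling the behaviour at the boundaries between adjacent supertiles, where the coordinates reset and the next level's structure must match up.
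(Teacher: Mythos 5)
Your overall architecture --- an SFT whose alphabet records, for each cell, its position inside a hierarchy of nested supertiles, followed by the letter-projection --- is indeed the skeleton of Mozes' construction, which the paper only cites and then sketches in the ``Idea of the proof'' of Corollary~\ref{corollary.mozes}. But there is a genuine gap, and it sits exactly at the step you dismiss in one sentence, namely the inclusion $\pi(\widehat{\T})\subseteq\T_\S$. A configuration of the hierarchical SFT is \emph{not} in general exhausted by a single tower of supertiles converging back to one letter: the nested rectangles can grow to infinity around one, two or four ``fault lines'', so that a pattern straddling these lines is derived, at every level, from a $2\times2$ (or $2\times1$, or $1\times2$) seed that is never the image of a single letter under any composition of substitutions. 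For such patterns your claim that ``every finite pattern seen in a configuration of $\widehat{\T}$ sits inside a genuine $\S$-pattern'' does not follow from the local rules alone; Mozes has to impose the extra condition (the set $Q$ in the paper's sketch) that every $2\times2$ pattern occurring in a preimage already occurs inside some $\S$-pattern, and this is precisely where property $A$ is consumed.

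Relatedly, you have misread what property $A$ says. It is not a recognizability or unique-decomposition statement (recognizability is neither assumed nor generally true for non-deterministic substitutions; the paper only introduces a ``unique derivation'' hypothesis much later, for the converse theorem). Property $A$ is an \emph{extension} property: if a $2\times2$ block $l$ sits inside an $\S$-pattern $p$ and one applies to $l$ an arbitrary admissible sequence of patterns of substitutions, then that derivation of $l$ can be completed to a derivation of all of $p$ producing exactly the same blocks above $l$. This is what upgrades ``derived from a $2\times2$ seed lying in some $\S$-pattern'' to ``is a sub-pattern of an $\S$-pattern'', closing the fault-line case above. Your proof as written spends property $A$ on the local enforceability of the grid (which is really handled by the Robinson-style self-similar markers and needs no hypothesis on $\S$) and leaves the one place where the hypothesis is actually needed unproved.
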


We say that a set of substitutions $\S$ is \define{of type $A$}, or \define{has
the property $A$}, if it satisfies the following condition. Let
$p=\vect{u}_1\circ\dots\circ\vect{u}_k(a)$ be a $\S$-pattern $p$ and $l$ a
$2\times2$ pattern that appears in $p$. Suppose there
exists a sequence of patterns of substitutions $\vect{s}_1,\dots,\vect{s}_n$
compatible with the $2\times2$ pattern $l$ that produce a sequence of patterns
$l_0=l,l_1=\vect{s}_1(l_0),\dots,l_n=\vect{s}_n(l_{n-1})$. Then it is possible
to find a sequence of patterns of substitution $\vect{s}'_1,\dots,\vect{s}'_n$
compatible with the pattern $p$ such that the blocks that derive from $l$ in
$p_0=p,p_1=\vect{s}'_1(p_0),\dots,p_n=\vect{s}'_n(p_{n-1})$ are exactly
$l_0,l_1,\dots,l_n$ (see Figure~\ref{figure.typeA}).

\begin{remark}
This property A for sets of substitutions is actually not very restrictive. For
instance any set of substitutions $\S$ such that for every substitution
$s\in\S$, the support of $s(a)$ is the same for any $a\in\A$, has the property
$A$. Moreover, if the set $\S$ is reduced to a single deterministic substitution
$s$, then $\S$ is of type $A$.
\end{remark}

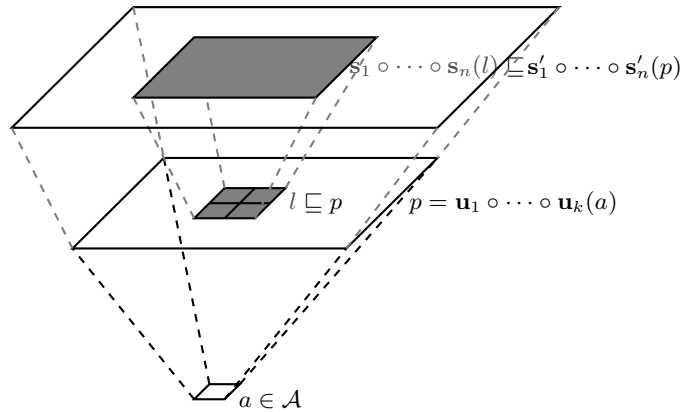
\begin{figure}
\begin{center}
\begin{tikzpicture}[scale=0.4]

\draw[fill=white,thick] (0,0)--(1,0)--(1.5,0.5)--(0.5,0.5)--cycle;
\draw (2.5,0) node{$a\in\A$};
\draw[fill=white,thick] (-4,5)--(5,5)--(8,8)--(-1,8)--cycle;
\draw (10.5,6.5) node{$p=\vect{u}_1\circ\dots\circ\vect{u}_k(a)$};

\draw [thick,dashed] (0,0)--(-4,5);
\draw [thick,dashed] (1,0)--(5,5);
\draw [thick,dashed] (1.5,0.5)--(8,8);
\draw [thick,dashed] (0.5,0.5)--(-1,8);

\draw[fill=black!50,thick] (0,6)--(2,6)--(3,7)--(1,7)--cycle;
\draw (8,11) node{$\textcolor{black!70}{\vect{s}_1\circ\dots\circ\vect{s}_n(l)}\sqsubseteq$};
\draw[thick] (1,6)--(2,7);
\draw[thick] (0.5,6.5)--(2.5,6.5);

\draw[fill=black!50,thick] (-2,10)--(4,10)--(6,12)--(0,12)--cycle;
\draw (4,6.5) node{$\textcolor{black!70}{l}\sqsubseteq p$};
\draw [color=black!50,thick,dashed] (0,6)--(-2,10);
\draw [color=black!50,thick,dashed] (2,6)--(4,10);
\draw [color=black!50,thick,dashed] (3,7)--(6,12);
\draw [color=black!50,thick,dashed] (1,7)--(0,12);

\draw[thick] (-6,9)--(8,9)--(12,13)--(-2,13)--cycle;
\draw (13.5,11) node{$\vect{s}'_1\circ\dots\circ\vect{s}'_n(p)$};
\draw [color=black!50,thick,dashed] (-6,9)--(-4,5);
\draw [color=black!50,thick,dashed] (8,9)--(5,5);
\draw [color=black!50,thick,dashed] (12,13)--(8,8);
\draw [color=black!50,thick,dashed] (-2,13)--(-1,8);

\end{tikzpicture}
\caption{A set of substitutions $\S$ having the property $A$.}
\label{figure.typeA}
\end{center}
\end{figure}

It is possible that the composition of substitutions rules chosen for $l$ is
not compatible with the pattern $p$, and in this case it is possible to find
another sequence of substitution rules compatible with $p$ and such that the
the blocks that derive from $l$ are exactly the $l_0=l,l_1,\dots,l_n$.

Actually one can easily adapt Mozes proof to get a similar result for the
subshift $\T'_\S$. Moreover we do not require the set of substitutions $\S$ to
have the property $A$.

\begin{corollary}\label{corollary.mozes}
Let $\S$ be a set of deterministic multidimensional substitutions. Then the
subshift $\T'_\S$ is sofic. 
\end{corollary}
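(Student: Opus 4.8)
The plan is to reuse the sofic cover underlying Theorem~\ref{theorem:Mozes}, but to observe that the natural hierarchical SFT produced by Mozes' method realises $\T'_\S$ directly, with no combinatorial hypothesis on~$\S$. First I would recall the shape of that cover: one builds a $d$-dimensional SFT~$\Sigma$ over an enlarged alphabet in which each cell carries, besides its base letter of~$\A$, an infinite hierarchical address recording, for every level~$n$, the substitution~$\vect{s}_n$ used at that level, the position of the cell inside the level-$n$ macro-tile containing it, and the letter this macro-tile represents at level~$n+1$. The finite-window rules of~$\Sigma$ force this data to be internally coherent: at each level the decoration of a macro-tile is exactly the image of its level-$(n+1)$ letter under~$\vect{s}_n$, neighbouring macro-tiles fit together, and the compatibility conditions on the letter-dependent sizes~$\vect{k}^{\vect{s}_n}$ are respected across shared borders. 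The factor map $\pi:\Sigma\to\A^{\Z^d}$ simply forgets the decoration.

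The heart of the proof is then to identify $\pi(\Sigma)$ with $\T'_\S$. By construction $x\in\pi(\Sigma)$ exactly when $x$ carries a globally coherent infinite tower of desubstitutions, that is, patterns of substitutions $(\vect{s}_n)_{n\in\N}$ and configurations~$y_n$ with $\vect{s}_0\circ\dots\circ\vect{s}_{n-1}(y_n)=x$ for all~$n$ and $y_{n-1}=\vect{s}_{n-1}(y_n)$. One implication is immediate, since each truncation of such a tower gives the depth-$n$ preimage required by the definition of~$\T'_\S$. For the converse I would argue by compactness: writing $T_n$ for the set of depth-$n$ towers $(\vect{s}_0,\dots,\vect{s}_{n-1},y)$ with $\vect{s}_0\circ\dots\circ\vect{s}_{n-1}(y)=x$, each~$T_n$ is nonempty by definition of~$\T'_\S$, and it is a closed, hence compact, subset of $\left(\S^{\Z^d}\right)^n\times\A^{\Z^d}$ because the composed-substitution map is continuous. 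Truncation, sending $(\vect{s}_0,\dots,\vect{s}_{n-1},y)$ to $(\vect{s}_0,\dots,\vect{s}_{n-2},\vect{s}_{n-1}(y))$, gives continuous bonding maps $T_n\to T_{n-1}$ that fix the earlier levels; the inverse limit of this system of nonempty compacta is nonempty, and any point of it is exactly an infinite coherent tower. Hence $\pi(\Sigma)=\T'_\S$ and $\T'_\S$ is sofic.

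Finally I would explain why the property~$A$ may be dropped. In Mozes' original argument that hypothesis serves only to synchronise the local desubstitutions of $2\times2$ windows into a genuine global $\S$-pattern, which is what is needed to make the projection land in the \emph{smaller} subshift~$\T_\S$ (recall $\T_\S\subseteq\T'_\S$, possibly strictly, as in Example~\ref{example.2}). Here no such synchronisation is wanted: $\T'_\S$ is by definition the set of configurations possessing a full infinite hierarchy, and the local rules of~$\Sigma$ already cut out precisely these configurations. The non-degeneracy assumed in Theorem~\ref{theorem:Mozes} is inherited and is used only to give the macro-tiles a well-defined geometry that grows with the level; it plays no role in the identification of~$\pi(\Sigma)$ above. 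The step I expect to be most delicate is the bookkeeping in the first paragraph: checking that the finitely many local rules of~$\Sigma$ faithfully encode a single desubstitution step when the sizes~$\vect{k}^{\vect{s}}(a)$ depend on the letter, so that the compatibility conditions are correctly propagated along every macro-tile border and the cover is genuinely an SFT.
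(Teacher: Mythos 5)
Your proposal is correct and follows essentially the same route as the paper: reuse Mozes' hierarchical sofic cover, observe that the type-$A$ hypothesis and the $2\times2$-pattern condition $Q$ serve only to force projected configurations into the smaller subshift $\T_\S$, and drop them so that the cover projects exactly onto $\T'_\S$. Your explicit inverse-limit/compactness argument showing that depth-$n$ preimages for every $n$ assemble into a single coherent infinite tower is a detail the paper's sketch leaves implicit, but it is a refinement of, not a departure from, the same proof.
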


\textit{Idea of the proof.} We here give some ideas to adapt of the proof of
Theorem~\ref{theorem:Mozes} to obtain a sketch of the proof of
Corollary~\ref{corollary.mozes}. Let $\S$ be a set of substitutions of type
$A$. Mozes constructs a sofic subshift $\Sigma$ such that $\T_\S$ is a factor of
$\Sigma$. The subshift $\Sigma$ contains a grid that ensures that a
configuration $x$ is in the sofic subshift $\Sigma$ if and only if one can find,
for any $n\in\N$, a sequence of infinite patterns of substitutions
$\vect{s}_0,\dots,\vect{s}_{n-1}\in\S^{\Z^d}$ and a
configuration $y_n$ such that $\vect{s}_0\circ\dots\circ \vect{s}_{n-1}
(y_n)=x$. In $\Sigma$ all the $y_n$ are coded in a hierarchical structure. Let
$Q$ be the set of $2\times2$ patterns that appear in a $\S$-pattern. There is an
additional condition : any $2\times2$ pattern that appears in any
configuration $y_n$ is in $Q$. This construction works: given a
configuration $x\in\T_\S$ it is easy to construct a $y\in\Sigma$ that encodes
$x$ and all its pre-images. Reciprocally given a pattern $p$ that appears in a
configuration $x\in\Sigma$, one can find a sequence of finite patterns of
substitutions $(\vect{s}_0,\dots,\vect{s}_{n-1})$ such that $p$ appears in
$\vect{s}_0\circ\dots\circ \vect{s}_{n-1} (p')$, where
$p'$ is either a letter or a $2\times1$, a $1\times2$ or a $2\times2$
pattern. If $p'$ is a letter then $p$ appears in a $\S$-pattern.
Otherwise, $p'$ appears in a $2\times2$ pattern that appears itself
in a $\S$-pattern -- thanks to condition $Q$ --, hence the property $A$ ensures
that $p$ also appears in a $\S$-pattern, that is to say generated by one letter
$a$ (see Figure~\ref{figure.typeA}). So any pattern appearing in $x$ appears in
a $\S$-pattern.

So both type $A$ condition and $Q$ condition are made to ensure that any
pattern that appears in a configuration $x$ also appears in a $\S$-pattern.

The difference between the subshifts $\T_\S$ and $\T'_\S$ is that we remove the
condition that forces a pattern appearing in a configuration $x$ to be a
$\S$-pattern -- and of course we still require that $x$ has a pre-image of any
order by $\S$. Hence the condition of having the property $A$ is no longer
needed, and if we adapt Mozes construction with replacing the set $Q$ by the set
of all the $2\times2$ patterns, then $\T'_\S$ is a factor of the sofic subshift
obtained. This proves the corollary.

%

\subsection{Effective S-adic subshifts are sofic}

Let $\Seq\in\S^{\N}$, of course one has $\T_{\Seq}\subset\T_{\S}$, but there is
no immediate reason for $\T_{\Seq}$ to be also sofic, and moreover for
cardinality reasons there exists non-sofic S-adic subshifts. 

\begin{theorem}\label{theorem.Sadic}
Let $\S$ be a finite set of non degenerate multidimensional substitutions and
$\Seq\in\S^{\N}$ be an effective sequence. Then $\T'_\Seq$ is sofic. If $\S$ has
the property $A$, then $\T_\Seq$ is sofic. 
\end{theorem}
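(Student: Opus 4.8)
The plan is to obtain the S-adic subshifts by \emph{carving them out} of the non-deterministic subshifts whose soficity is already granted by Corollary~\ref{corollary.mozes} and Theorem~\ref{theorem:Mozes}. Indeed $\T'_\Seq\subseteq\T'_\S$ and $\T_\Seq\subseteq\T_\S$, and the difference between the S-adic and the non-deterministic versions is exactly that in the S-adic case the substitution used at a given level of iteration is (i) the same at every position and (ii) prescribed by the sequence $\Seq$. I would therefore impose these two constraints on a Mozes-type sofic cover, exploiting that soficity is preserved under intersection with an SFT, under products, and under factor maps. So it suffices to realise constraint (i) locally and constraint (ii) by superimposing a sofic subshift coming from the effectivity of $\Seq$.

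First I would recall the hierarchical cover. By Corollary~\ref{corollary.mozes} there is a sofic subshift $\Sigma$ and a factor map $\pi$ with $\pi(\Sigma)=\T'_\S$; when $\S$ has property $A$, Theorem~\ref{theorem:Mozes} provides such a $\Sigma$ with $\pi(\Sigma)=\T_\S$. Each configuration of $\Sigma$ encodes a configuration $x$ together with its whole tower of pre-images, and in particular records, at every level $n$ and every position, which substitution of $\S$ is being applied. The uniformity constraint (i) is \emph{local} in $\Sigma$: it is enough to forbid two adjacent cells of the same level from carrying different substitution labels, after which connectedness of each level propagates a single label throughout. Intersecting $\Sigma$ with the corresponding SFT yields a sofic subshift $\Sigma_1$ whose factor is $\bigcup_{\Seq'\in\S^\N}\T'_{\Seq'}$ (resp. the unprimed version). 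Every configuration of $\Sigma_1$ now determines a well-defined sequence $(s'_n)_{n\in\N}\in\S^\N$, read off by following the tower of super-tiles through the levels.

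The remaining step, forcing $(s'_n)_{n\in\N}$ to equal $\Seq$, is where effectivity enters. Since $\Seq$ is effective, the one-dimensional subshift $Z$ coding $\Seq$ — configurations carrying an origin marker pinning level $0$, beyond which the symbols spell $\Seq$ — is an effective subshift, because checking that the suffix after the marker is a prefix of $\Seq$ is a recursively enumerable condition. By Theorem~\ref{theorem.simulation}, $Z$ is sofic, realisable by a factor map together with a projective subaction of an SFT of strictly higher dimension. I would then take the product of $\Sigma_1$ with this realisation and add local matching rules identifying, for every $n$, the label $s'_n$ read from level $n$ of the tower with the $n$-th symbol of the coding sequence. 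The result is again sofic, and applying $\pi$ composed with the projection discarding the auxiliary coordinates produces exactly $\T'_\Seq$ (resp. $\T_\Seq$).

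The hard part will be the synchronisation in this last step: the levels of the Mozes hierarchy are not stacked along a coordinate axis but nested self-similarly, so the $n$-th symbol of the linear coding must be made to sit at the right scale to be compared with the substitution used at level $n$ of the tower. Concretely one must choose the direction and geometry of the projective subaction carrying $Z$ so that its coordinate tracks the scaling of the hierarchy, and check that the local matching rules can read both pieces of information in a common window. Once this alignment is fixed, the identification of the factor is routine: a configuration lies in the image iff it admits, for every $n$, a pre-image under $s_0\circ\cdots\circ s_n$, which is the defining condition of $\T'_\Seq$; and in the property-$A$ case the $Q$-condition argument underlying Corollary~\ref{corollary.mozes} additionally guarantees that every pattern of $x$ appears in some $\widehat{\Seq_{[0,n]}}(a)$, giving $\T_\Seq$.
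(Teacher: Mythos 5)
Your overall architecture --- start from a Mozes-type sofic cover of $\T'_\S$ (resp.\ $\T_\S$), cut it down by an SFT condition forcing one substitution per level, then superimpose a sofic realization of a one-dimensional effective object coding $\Seq$ and match the two --- is the same as the paper's. But the step you yourself flag as ``the hard part'' is not a technicality to be checked afterwards: it is the actual content of the proof, and the device you propose for it does not work as stated. First, a one-dimensional ``subshift'' $Z$ whose configurations carry an origin marker pinning level $0$ is not shift-invariant; its closure necessarily contains configurations with no marker at all, and configurations reading arbitrary suffixes $\Seq_{[n,\infty)}$, on which the constraint ``spells $\Seq$ after the marker'' degenerates, so Theorem~\ref{theorem.simulation} applied to $Z$ cannot pin down which substitution is used at level $0$. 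Second, even granting a usable $Z$, Theorem~\ref{theorem.simulation} realizes it along a coordinate line with its $n$-th symbol at position $n$, whereas the label $s'_n$ lives on level-$n$ super-tiles whose side grows exponentially in $n$; there is no bounded common window in which a local rule can compare the $n$-th symbol of $Z$ with the level-$n$ label, and you give no mechanism for realigning the two geometries.

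The paper's proof supplies exactly the missing idea: instead of trying to read the per-level data off the hierarchy, it modifies the substitutions so that this data is written at level $0$. One passes to the alphabet $\A'=\A\times\S^2$ and to decorated substitutions $\widetilde s$ that stamp the name of the substitution being applied into the interior of each image block while passing the inherited labels along the top row and rightmost column. After iterating $\widetilde{\Seq}$, the $\pi_V$-component of any configuration of $\T_{\widetilde{\Seq}}$ is constant along one coordinate direction and, read along the other, is a Toeplitz-like sequence encoding the entire sequence $\Seq$ in a shift-invariant way. The uniformity constraint then becomes the genuinely local condition defining $\widetilde{\T}_{\widetilde{\S}}$ inside the sofic subshift $\T_{\widetilde{\S}}$, and the matching with $\Seq$ becomes a letter-to-letter identification of $\pi_V(x)$ with the image of an SFT realizing the effective one-dimensional subshift $\Sigma=\sa{(1,0,\dots,0)\Z}{\pi_V(\T_{\widetilde{\Seq}})}$ via Theorem~\ref{theorem.simulation}. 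To salvage your version you would need to reproduce this (or an equivalent) conversion of the multi-scale label sequence into a shift-invariant, bottom-level-readable subshift before invoking Theorem~\ref{theorem.simulation}.
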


\remark{We only present the proof that $\T_\Seq$ is sofic. The proof is similar
for $\T'_\Seq$, one just need to replace $\T_{\S}$ by $\T'_{\S}$.}

\begin{proof}
We now assume that $d=2$, the proof is similar for $d\geq3$. Let $\S$ be a
finite set of non degenerate $(\A,2)$-substitutions, we define
$\A'=\A\times\S^2$. To every $s\in\S$ we associate a $(\A',2)$-substitution
$\widetilde{s}$ with same support such that 
\scriptsize
$$
\hspace{-1.5cm}
\widetilde{s}:(a,s_V,s_H)\mapsto
\begin{array}{c|ccc|c}
(s(a)_{(0,\vect{k}_2^s(a))},s,s_H) & s(a)_{(1,\vect{k}_2^s(a))},s,s_H) & \dots &
s(a)_{(\vect{k}_1^s(a)-1,\vect{k}_2^s(a))},s,s_H) &
(s(a)_{(\vect{k}_1^s(a),\vect{k}_2^s(a))},s_V,s_H)\\
\hline
(s(a)_{(0,\vect{k}_2^s(a)-1)},s,s) & & & & (s(a)_{(\vect{k}_1^s(a),\vect{k}_2^s(a)-1)},s_V,s) \\
\vdots & & (s(a)_{(i,j)},s,s) & & \vdots \\
(s(a)_{(0,1)},s,s) & & & & (s(a)_{(\vect{k}_1^s(a),1)},s_V,s) \\
\hline
(s(a)_{(0,0)},s,s) & (s(a)_{(1,0)},s,s) & \dots & (s(a)_{(\vect{k}_1^s(a)-1,0)},s,s) &
(s(a)_{(\vect{k}_1^s(a),0)},s_V,s)
\end{array}
$$
\normalsize

All these substitutions $\widetilde{s}$ form a set
$\widetilde{\S}=\left\{\widetilde{s}: s\in\S\right\}$. Let
$\Seq=(s_i)_{i\in\N}\in\S^{\N}$ be an effective sequence, we can thus consider
the effective sequence
$\widetilde{\Seq}=(\widetilde{s}_i)_{i\in\N}\in\widetilde{\S}^{\N}$. The aim of
substitutions $\widetilde{s}$ is to keep a record of the sequence of
substitutions previously applied.

\begin{example}\label{example.4}
Let $\S$ be the set of 2-dimensional substitutions on the alphabet
$\A=\{\circ,\bullet\}$ defined in Example~\ref{example.3}. On the following
picture where
$\widetilde{\Seq}=(\widetilde{s}_d,\widetilde{s}_d,\widetilde{s}_a,\widetilde{s}
_a,\dots)$ applied on the letter $\bullet$, one can find on the bottom line of
the patterns $s_2(\bullet),s_1\circ s_2(\bullet),s_0\circ s_1\circ
s_2(\bullet),\dots$ the sequence of the substitutions already applied appears.

$$
\hspace{-1cm}
(\bullet,s_3,s_3) \overset{\textcolor{blue}{\widetilde{s}_2}}{\longmapsto}
\scriptsize
\begin{array}{cc}
(\circ,\textcolor{blue}{s_2},s_3) & (\circ,s_3,s_3) \\
(\bullet,\textcolor{blue}{s_2},\textcolor{blue}{s_2}) &
(\circ,s_3,\textcolor{blue}{s_2})
\end{array}
\normalsize
\overset{\textcolor{red}{\widetilde{s}_1}}{\longmapsto}
\scriptsize
\begin{array}{cc|cc}
(\circ,\textcolor{red}{s_1},s_3) & (\circ,\textcolor{blue}{s_2},s_3) &
(\circ,\textcolor{red}{s_1},s_3) & (\circ,s_3,s_3) \\
(\circ,\textcolor{red}{s_1},\textcolor{red}{s_1}) &
(\circ,\textcolor{blue}{s_2},\textcolor{red}{s_1}) &
(\circ,\textcolor{red}{s_1},\textcolor{red}{s_1}) &
(\circ,s_3,\textcolor{red}{s_1}) \\
\hline
(\circ,\textcolor{red}{s_1},\textcolor{blue}{s_2}) &
(\circ,\textcolor{blue}{s_2},\textcolor{blue}{s_2}) &
(\circ,\textcolor{red}{s_1},\textcolor{blue}{s_2}) &
(\circ,s_3,\textcolor{blue}{s_2}) \\
(\bullet,\textcolor{red}{s_1},\textcolor{red}{s_1}) &
(\circ,\textcolor{blue}{s_2},\textcolor{red}{s_1}) &
(\circ,\textcolor{red}{s_1},\textcolor{red}{s_1}) &
(\circ,s_3,\textcolor{red}{s_1})
\end{array}
\normalsize
$$
$$
\hspace{-1.5cm}
\overset{\textcolor{violet}{\widetilde{s}_0}}{\longmapsto}
\scriptsize
\begin{array}{ccc|ccc|ccc|ccc}
(\bullet,\textcolor{violet}{s_0},s_3) &
(\bullet,\textcolor{violet}{s_0},s_3) & (\bullet,\textcolor{red}{s_1},s_3) &
(\bullet,\textcolor{violet}{s_0},s_3) &
(\bullet,\textcolor{violet}{s_0},s_3) & (\bullet,\textcolor{blue}{s_2},s_3) &
(\bullet,\textcolor{violet}{s_0},s_3) &
(\bullet,\textcolor{violet}{s_0},s_3) &
(\bullet,\textcolor{red}{s_1},s_3) & (\bullet,\textcolor{violet}{s_0},s_3) &
(\bullet,\textcolor{violet}{s_0},s_3) & (\bullet,s_3,s_3) \\
(\bullet,\textcolor{violet}{s_0},\textcolor{violet}{s_0}
) &
(\bullet,\textcolor{violet}{s_0},\textcolor{violet}{s_0}
) & (\bullet,\textcolor{red}{s_1},\textcolor{violet}{s_0}) &
(\bullet,\textcolor{violet}{s_0},\textcolor{violet}{s_0}
) &
(\bullet,\textcolor{violet}{s_0},\textcolor{violet}{s_0}
) & (\bullet,\textcolor{blue}{s_2},\textcolor{violet}{s_0}) &
(\bullet,\textcolor{violet}{s_0},\textcolor{violet}{s_0}
) &
(\bullet,\textcolor{violet}{s_0},\textcolor{violet}{s_0}
) &
(\bullet,\textcolor{red}{s_1},\textcolor{violet}{s_0}) &
(\bullet,\textcolor{violet}{s_0},\textcolor{violet}{s_0}
) &
(\bullet,\textcolor{violet}{s_0},\textcolor{violet}{s_0}
) & (\bullet,s_3,\textcolor{violet}{s_0}) \\
(\circ,\textcolor{violet}{s_0},\textcolor{violet}{s_0})
&
(\circ,\textcolor{violet}{s_0},\textcolor{violet}{s_0})
& (\circ,\textcolor{red}{s_1},\textcolor{violet}{s_0}) &
(\circ,\textcolor{violet}{s_0},\textcolor{violet}{s_0})
&
(\circ,\textcolor{violet}{s_0},\textcolor{violet}{s_0})
& (\circ,\textcolor{blue}{s_2},\textcolor{violet}{s_0}) &
(\circ,\textcolor{violet}{s_0},\textcolor{violet}{s_0})
&
(\circ,\textcolor{violet}{s_0},\textcolor{violet}{s_0})
& (\circ,\textcolor{red}{s_1},\textcolor{violet}{s_0}) &
(\circ,\textcolor{violet}{s_0},\textcolor{violet}{s_0})
&
(\circ,\textcolor{violet}{s_0},\textcolor{violet}{s_0})
& (\circ,s_3,\textcolor{violet}{s_0}) \\
\hline
(\bullet,\textcolor{violet}{s_0},\textcolor{red}{s_1}) &
(\bullet,\textcolor{violet}{s_0},\textcolor{red}{s_1}) &
(\bullet,\textcolor{red}{s_1},\textcolor{red}{s_1}) &
(\bullet,\textcolor{violet}{s_0},\textcolor{red}{s_1}) &
(\bullet,\textcolor{violet}{s_0},\textcolor{red}{s_1}) &
(\bullet,\textcolor{blue}{s_2},\textcolor{red}{s_1}) &
(\bullet,\textcolor{violet}{s_0},\textcolor{red}{s_1}) &
(\bullet,\textcolor{violet}{s_0},\textcolor{red}{s_1}) &
(\bullet,\textcolor{red}{s_1},\textcolor{red}{s_1}) &
(\bullet,\textcolor{violet}{s_0},\textcolor{red}{s_1}) &
(\bullet,\textcolor{violet}{s_0},\textcolor{red}{s_1}) &
(\bullet,s_3,\textcolor{red}{s_1}) \\
(\bullet,\textcolor{violet}{s_0},\textcolor{violet}{s_0}
) &
(\bullet,\textcolor{violet}{s_0},\textcolor{violet}{s_0}
) & (\bullet,\textcolor{red}{s_1},\textcolor{violet}{s_0}) &
(\bullet,\textcolor{violet}{s_0},\textcolor{violet}{s_0}
) &
(\bullet,\textcolor{violet}{s_0},\textcolor{violet}{s_0}
) & (\bullet,\textcolor{blue}{s_2},\textcolor{violet}{s_0}) &
(\bullet,\textcolor{violet}{s_0},\textcolor{violet}{s_0}
) &
(\bullet,\textcolor{violet}{s_0},\textcolor{violet}{s_0}
) &
(\bullet,\textcolor{red}{s_1},\textcolor{violet}{s_0}) &
(\bullet,\textcolor{violet}{s_0},\textcolor{violet}{s_0}
) &
(\bullet,\textcolor{violet}{s_0},\textcolor{violet}{s_0}
) & (\bullet,s_3,\textcolor{violet}{s_0}) \\
(\circ,\textcolor{violet}{s_0},\textcolor{violet}{s_0})
&
(\circ,\textcolor{violet}{s_0},\textcolor{violet}{s_0})
& (\circ,\textcolor{red}{s_1},\textcolor{violet}{s_0}) &
(\circ,\textcolor{violet}{s_0},\textcolor{violet}{s_0})
&
(\circ,\textcolor{violet}{s_0},\textcolor{violet}{s_0})
& (\circ,\textcolor{blue}{s_2},\textcolor{violet}{s_0})
&
(\circ,\textcolor{violet}{s_0},\textcolor{violet}{s_0})
&
(\circ,\textcolor{violet}{s_0},\textcolor{violet}{s_0})
& (\circ,\textcolor{red}{s_1},\textcolor{violet}{s_0}) &
(\circ,\textcolor{violet}{s_0},\textcolor{violet}{s_0})
&
(\circ,\textcolor{violet}{s_0},\textcolor{violet}{s_0})
&
(\circ,s_3,\textcolor{violet}{s_0}) \\
\hline
(\bullet,\textcolor{violet}{s_0},\textcolor{blue}{s_2}) &
(\bullet,\textcolor{violet}{s_0},\textcolor{blue}{s_2}) &
(\bullet,\textcolor{red}{s_1},\textcolor{blue}{s_2}) &
(\bullet,\textcolor{violet}{s_0},\textcolor{blue}{s_2}) &
(\bullet,\textcolor{violet}{s_0},\textcolor{blue}{s_2}) &
(\bullet,\textcolor{blue}{s_2},\textcolor{blue}{s_2}) &
(\bullet,\textcolor{violet}{s_0},\textcolor{blue}{s_2}) &
(\bullet,\textcolor{violet}{s_0},\textcolor{blue}{s_2}) &
(\bullet,\textcolor{red}{s_1},\textcolor{blue}{s_2}) &
(\bullet,\textcolor{violet}{s_0},\textcolor{blue}{s_2}) &
(\bullet,\textcolor{violet}{s_0},\textcolor{blue}{s_2}) &
(\bullet,s_3,\textcolor{blue}{s_2}) \\
(\bullet,\textcolor{violet}{s_0},\textcolor{violet}{s_0}
) &
(\bullet,\textcolor{violet}{s_0},\textcolor{violet}{s_0}
) & (\bullet,\textcolor{red}{s_1},\textcolor{violet}{s_0}) &
(\bullet,\textcolor{violet}{s_0},\textcolor{violet}{s_0}
) &
(\bullet,\textcolor{violet}{s_0},\textcolor{violet}{s_0}
) & (\bullet,\textcolor{blue}{s_2},\textcolor{violet}{s_0}) &
(\bullet,\textcolor{violet}{s_0},\textcolor{violet}{s_0}
) &
(\bullet,\textcolor{violet}{s_0},\textcolor{violet}{s_0}
) &
(\bullet,\textcolor{red}{s_1},\textcolor{violet}{s_0}) &
(\bullet,\textcolor{violet}{s_0},\textcolor{violet}{s_0}
) &
(\bullet,\textcolor{violet}{s_0},\textcolor{violet}{s_0}
) & (\bullet,s_3,\textcolor{violet}{s_0}) \\
(\circ,\textcolor{violet}{s_0},\textcolor{violet}{s_0})
&
(\circ,\textcolor{violet}{s_0},\textcolor{violet}{s_0})
& (\circ,\textcolor{red}{s_1},\textcolor{violet}{s_0}) &
(\circ,\textcolor{violet}{s_0},\textcolor{violet}{s_0})
&
(\circ,\textcolor{violet}{s_0},\textcolor{violet}{s_0})
& (\circ,\textcolor{blue}{s_2},\textcolor{violet}{s_0}) &
(\circ,\textcolor{violet}{s_0},\textcolor{violet}{s_0})
&
(\circ,\textcolor{violet}{s_0},\textcolor{violet}{s_0})
& (\circ,\textcolor{red}{s_1},\textcolor{violet}{s_0}) &
(\circ,\textcolor{violet}{s_0},\textcolor{violet}{s_0})
&
(\circ,\textcolor{violet}{s_0},\textcolor{violet}{s_0})
& (\circ,s_3,\textcolor{violet}{s_0}) \\
\hline
(\circ,\textcolor{violet}{s_0},\textcolor{red}{s_1}) &
(\circ,\textcolor{violet}{s_0},\textcolor{red}{s_1}) &
(\circ,\textcolor{red}{s_1},\textcolor{red}{s_1}) &
(\bullet,\textcolor{violet}{s_0},\textcolor{red}{s_1}) &
(\bullet,\textcolor{violet}{s_0},\textcolor{red}{s_1}) &
(\bullet,\textcolor{blue}{s_2},\textcolor{red}{s_1}) &
(\bullet,\textcolor{violet}{s_0},\textcolor{red}{s_1}) &
(\bullet,\textcolor{violet}{s_0},\textcolor{red}{s_1}) &
(\bullet,\textcolor{red}{s_1},\textcolor{red}{s_1}) &
(\bullet,\textcolor{violet}{s_0},\textcolor{red}{s_1}) &
(\bullet,\textcolor{violet}{s_0},\textcolor{red}{s_1}) &
(\bullet,s_3,\textcolor{red}{s_1}) \\
(\circ,\textcolor{violet}{s_0},\textcolor{violet}{s_0})
&
(\circ,\textcolor{violet}{s_0},\textcolor{violet}{s_0})
& (\circ,\textcolor{red}{s_1},\textcolor{violet}{s_0}) &
(\bullet,\textcolor{violet}{s_0},\textcolor{violet}{s_0}
) &
(\bullet,\textcolor{violet}{s_0},\textcolor{violet}{s_0}
) & (\bullet,\textcolor{blue}{s_2},\textcolor{violet}{s_0}) &
(\bullet,\textcolor{violet}{s_0},\textcolor{violet}{s_0}
) &
(\bullet,\textcolor{violet}{s_0},\textcolor{violet}{s_0}
) &
(\bullet,\textcolor{red}{s_1},\textcolor{violet}{s_0}) &
(\bullet,\textcolor{violet}{s_0},\textcolor{violet}{s_0}
) &
(\bullet,\textcolor{violet}{s_0},\textcolor{violet}{s_0}
) & (\bullet,s_3,\textcolor{violet}{s_0}) \\
(\bullet,\textcolor{violet}{s_0},\textcolor{violet}{s_0}
) &
(\bullet,\textcolor{violet}{s_0},\textcolor{violet}{s_0}
) & (\bullet,\textcolor{red}{s_1},\textcolor{violet}{s_0}) &
(\circ,\textcolor{violet}{s_0},\textcolor{violet}{s_0})
&
(\circ,\textcolor{violet}{s_0},\textcolor{violet}{s_0})
& (\circ,\textcolor{blue}{s_2},\textcolor{violet}{s_0}) &
(\circ,\textcolor{violet}{s_0},\textcolor{violet}{s_0})
&
(\circ,\textcolor{violet}{s_0},\textcolor{violet}{s_0})
& (\circ,\textcolor{red}{s_1},\textcolor{violet}{s_0}) &
(\circ,\textcolor{violet}{s_0},\textcolor{violet}{s_0})
&
(\circ,\textcolor{violet}{s_0},\textcolor{violet}{s_0})
& (\circ,s_3,\textcolor{violet}{s_0}) 
\end{array}
$$
\end{example}

One considers $\pi:\A'\to\A$ the one-block map which keeps the letter of $\A$
and $\pi_V:\A'\to\S$ (resp. $\pi_H:\A'\to\S$) the one-block map which keeps the
substitution $s_V\in\S$ (resp. $s_H\in\S$) of an element $(a,s_V,s_H)\in\A'$.

\Claim{$\T_\Seq=\pi\left(\T_{\widetilde{\Seq}}\right)$}
\bclaimprf
This is straightforward, since the alphabet $\A'$ contains alphabet $\A$, and
substitution $\widetilde{s}$ restricted to alphabet $\A$ is exactly substitution
$s$.
\eclaimprf

Consequently, it is sufficient to prove that $\T_{\widetilde{\Seq}}$ is sofic.

\Claim{The subshift $\Sigma=\sa{(1,0\dots,0)\Z}{\pi_V(\T_{\widetilde{\Seq}})}$
is effective.}
\bclaimprf
The class of effective subshifts is closed under factor, but also under
projective subaction. This follows from the fact that projective subactions are
special cases of factors of subactions, and by Theorem~3.1 and Proposition~3.3
of ~\cite{hochman2007drp} which establish that symbolic factors and subactions
preserve effectiveness. That is to say $\mathcal{C}l_{\SA}(\reshift)=\reshift$.
Thus it is sufficient to prove that $\T_{\widetilde{\Seq}}$ is an effective
subshift. The sequence of substitutions $\Seq$ is effective so it is the same
for the sequence $\widetilde{\Seq}$. Consequently, one can design an algorithm
that computes the $\widetilde{\Seq}$-patterns, which proves that the subshift
$\T_{\widetilde{\Seq}}$ is effective.
\eclaimprf

By Theorem~\ref{theorem.simulation} there exists a
$d$-dimensional subshift of finite type $\T_{\Sigma}$ on an alphabet $\B$ and a
factor $\pi_{\Sigma}:\B^{\Z^d}\to\S^{\Z^d}$ such that 
$\Sigma=\pi_{\Sigma}\left(\sa{(1,0,\dots,0)\Z}{\T_{\Sigma}}\right)$. Note that
the fact that $d\geq2$ is crucial here, since the previous statement is not true
for $d=1$.

If we consider a configuration of the subshift $\T_{\widetilde{\S}}$ defined in
Section~\ref{subsection.nondeterm}, any substitution that appears may be chosen
in the set $\S$, provided it is still compatible with the configuration. But on
a same level, different substitutions may appear, which does not fit with the
S-adic subshift definition. To solve this problem we synchronize substitutions
so that the same substitution is used on a given level. To do that we need to
ensure for any configuration $x\in\T_{\widetilde{\S}}$, the same substitution
appears in $\pi_V(x)$ on each row (resp. in $\pi_H(x)$ on each column). 

We thus define the subshift $\widetilde{\T}_{\widetilde{\S}}$ on the following
way :  
$$\widetilde{\T}_{\widetilde{\S}}=\left\{
x\in\T_{\widetilde{\S}}:\forall(i,j)\in\Z^2,
\pi_H(x)_{(i,j)}=\pi_H(x)_{(i,j+1)}\text{ and
}\pi_V(x)_{(i,j)}=\pi_V(x)_{(i+1,j)} \right\}.$$

We deduce that 
$$\widetilde{\T}_{\widetilde{\S}}=\bigcup_{\widetilde{\Seq}\in\widetilde{\S}^\N}
\T_{\widetilde{\Seq}}\subset\T_{\widetilde{\S}}$$

Finally we consider
$$\T_{\texttt{Final}}=\left\{(x,s) \in
\widetilde{\T}_{\widetilde{\S}}\times\T_{\Sigma}: \forall (i,j)\in\Z^2,\,
\pi_{V}(x)_{(i,j)}=\pi_{\Sigma}(s)_{(i,j)} \right\}.$$

Thanks to Corollary~\ref{corollary.mozes}, we know that the subshift
$\T_{\widetilde{\S}}$ is sofic, hence so is $\widetilde{\T}_{\widetilde{\S}}$.
Hence by construction, $\T_{\texttt{Final}}$ is
a sofic subshift. Consider the letter-to-letter factor map
$\pi_{\texttt{Final}}:\T_{\texttt{Final}}\to\A'^{\Z^d}$ which
keeps the letter of $\A'$.

\Claim{$\pi_{\texttt{Final}}\left(\T_{\texttt{Final}}\right)=\T_{\widetilde{\Seq
}}$}
\bclaimprf 
Given a configuration $x\in\T_{\widetilde{\Seq}}$, it is easy to construct a
corresponding element in $\T_{\texttt{Final}}$. Reciprocally, suppose you are
given a configuration $x_\texttt{Final}\in\T_{\texttt{Final}}$. Replacing
substitutions in $\S$ by composition of two substitutions of $\S$ if necessary,
we assume that for all $s\in\S$ and all $a\in\A$,
$\vect{k}^s_1(a),\vect{k}^s_2(a)\geq2$. First the
$\widetilde{\T}_{\widetilde{\S}}$ part of $x_\texttt{Final}$ ensures that
$\pi_\texttt{Final}(x_\texttt{Final})$ is an element of one
$\T_{\widetilde{\Seq'}}$ for some $\widetilde{\Seq}'\in\widetilde{S}^\N$.
Secondly the condition that links the $\widetilde{\T}_{\widetilde{\S}}$ part
with the $\T_{\Sigma}$ part certifies that $\Seq'=\Seq$: substitution $s_0$ is
the only one which is repeated at least twice systematically -- since
$\vect{k}^s_1,\vect{k}^s_2\geq2$. If we apply the same reasoning to a pre-image
of $x_\texttt{Final}$ by $s_0$, we can find $s_1$ and so on.
\eclaimprf
\end{proof}

\section{Effective subshifts which are S-adic}

We want to find a reciprocal statement to Theorem~\ref{theorem.Sadic}: what can
we say about an effective subshift which is S-adic? A set of substitutions $\S$
has \emph{unique derivation} if for every element 
$x\in\widetilde{\T}_\S=\bigcup_{\Seq\in\S^\N}\T_\Seq$, there exist a unique
$s\in\S$, a unique $y\in\A^{\Z^d}$ and a unique
$i\in\bigcup_{a\in\A}\U_{\vect{k}^s(a)}$ such that
$s_\infty(y)=\sigma^i(x)$.

\begin{theorem}
Let $\S$ be a set of substitutions with unique derivation and let
$\Seq\in\S^\N$. If the S-adic subshift $\T_\Seq$ is effective (and in particular
if $\T_\Seq$ is sofic) then $\Seq$ is effective.
\end{theorem}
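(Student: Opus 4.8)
The plan is to recover the terms $s_0,s_1,s_2,\dots$ of $\Seq$ one after another by iterated de-substitution, the point being that under the unique derivation hypothesis the first substitution of the sequence is \emph{determined} by the subshift it generates, and that this determination can be made algorithmic once $\T_\Seq$ is effective. Write $\Seq'=(s_{i+1})_{i\in\N}$ for the shifted sequence. It suffices to produce a single uniform algorithm which, from an enumeration of a set of forbidden patterns for an effective subshift $\T_\Seq$ (with $\S$ of unique derivation), outputs the substitution $s_0$ together with an enumeration of a set of forbidden patterns for $\T_{\Seq'}$. Iterating this algorithm $k$ times and reading off the current first substitution then computes $s_k$ uniformly in $k$, which is exactly the assertion that $\Seq$ is effective.

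For $s\in\S$ I would introduce the de-substituted subshift
\[
D_s(\T_\Seq)=\bigl\{y\in\A^{\Z^2}: s\text{ compatible with }y\ \text{and}\ s_\infty(y)\in\T_\Seq\bigr\}.
\]
Because $\T_\Seq$ is shift-invariant, $D_s(\T_\Seq)$ is again a subshift, and it is nonempty exactly when some $x\in\T_\Seq$ admits $s$ as a first substitution in the sense of the unique derivation definition (the offset $i$ of that definition is absorbed by shift-invariance). A set of forbidden patterns for $D_s(\T_\Seq)$ is obtained by forbidding each rectangular pattern $q$ which is either incompatible with $s$ (a decidable local condition) or is such that $s(q)$ contains a forbidden pattern of $\T_\Seq$; a compactness argument shows this set does define $D_s(\T_\Seq)$, and it is recursively enumerable uniformly in $s$ from an enumeration for $\T_\Seq$ together with the finite data of $s$. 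Hence every $D_s(\T_\Seq)$ is effective with a description computable from that of $\T_\Seq$.

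The hypotheses enter through the following two points. The key preliminary observation $(\star)$ is that every $x\in\T_\Seq$ admits $s_0$ as a first substitution: since each defining pattern factors as $\widehat{\Seq_{[0,n]}}(a)=s_0\bigl(\widehat{\Seq'_{[0,n-1]}}(a)\bigr)$, the induced cutting structure on $x$ passes to the limit by compactness and exhibits $x$ as a shift of $s_{0,\infty}(y_0)$ with $y_0\in\T_{\Seq'}$. Granting $(\star)$: first, unique derivation upgrades it to the exact equality $D_{s_0}(\T_\Seq)=\T_{\Seq'}$, since the inclusion $\T_{\Seq'}\subseteq D_{s_0}(\T_\Seq)$ is immediate from the factorisation, while conversely any occurrence of a block $s_0(q)$ inside an $s_0(p)$ must align with the genuine tile boundaries of $s_0(p)$ — otherwise $x$ would carry two distinct derivations — so $q$ already occurs in $p=\widehat{\Seq'_{[0,n-1]}}(a)$; and $\T_{\Seq'}$ is nonempty by non-degeneracy. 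Second, if $s\neq s_0$ and $y\in D_s(\T_\Seq)$, then $x:=s_\infty(y)\in\T_\Seq\subseteq\widetilde{\T}_\S$ would admit both $(s,y,0)$ and, by $(\star)$, a triple with substitution $s_0$ as derivation triples, contradicting uniqueness; hence $D_s(\T_\Seq)=\emptyset$ for all $s\neq s_0$. Thus $s_0$ is singled out as the unique substitution of $\S$ whose de-substitution of $\T_\Seq$ is nonempty, and that de-substitution is faithfully $\T_{\Seq'}$.

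To turn this into an algorithm I would use that emptiness of an effective subshift is semi-decidable: by compactness $\T_F=\emptyset$ holds iff some finite box has every one of its patterns covered by a forbidden pattern, which is detected by enumerating $F$. Running in parallel the emptiness semi-deciders of the $|\S|$ effective subshifts $D_s(\T_\Seq)$, exactly the $|\S|-1$ of them with $s\neq s_0$ halt, and the lone survivor names $s_0$; the enumeration for $\T_{\Seq'}=D_{s_0}(\T_\Seq)$ is the one manufactured above, so one full step of the recursion is complete, and since $\Seq'\in\S^\N$ with the same $\S$ the recursion repeats verbatim. Reading off the first substitution after $k$ steps computes $s_k$, so $\Seq$ is effective. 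The main obstacle is the dichotomy of the previous paragraph: it is precisely there that unique derivation is indispensable, and it must be used as a genuinely global recognizability statement — both to rule out spurious de-substitutions and to certify that the surviving de-substitution reproduces $\T_{\Seq'}$ exactly. By comparison, the effectivity ingredients (de-substitution preserving effectiveness uniformly in $s$, and semi-decidability of emptiness) are routine consequences of compactness.
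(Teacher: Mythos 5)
Your proposal is correct and follows essentially the same route as the paper's own (sketchy) proof: for each $s\in\S$ you test in parallel whether $\T_\Seq$ can be de-substituted by $s$ without producing a forbidden pattern, unique derivation guarantees exactly one candidate survives while the rejections are semi-decidable, and you iterate on the de-substituted subshift. Your write-up merely makes explicit the ingredients the paper leaves implicit (effectivity of $D_s(\T_\Seq)$ uniformly in $s$, semi-decidability of emptiness via compactness, and the identification $D_{s_0}(\T_\Seq)=\T_{\Seq'}$ needed for the recursion).
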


\begin{proof}
 Since $\T_\Seq$ is effective, there exists a Turing machine $\mathcal{M}$ that
enumerates all its forbidden patterns. We compute the sets
$\mathcal{E}_s=\left\{ s(a):a\in\A\right\}$, and we try to find which
substitution $s\in\S$ is the first of the sequence $\Seq$. To do that, for every
$s\in\S$, we try to partition $\Z^d$ with patterns from one $\mathcal{E}_s$,
so that no pattern enumerated by the machine $\mathcal{M}$ appear. These
partitions are made in parallel, and the unique derivation condition ensures
that only one of them will work. The calculation stops when all the
substitutions but one has been rejected -- this always happens -- and this
substitution is $s_0$. Apply again this process to a pre-image of $x$ by
$s$, to get the next substitution $s_1$, and so on. 
\end{proof}

\bibliographystyle{alpha}
\bibliography{biblio}

\end{document}